\documentclass[accepted]{uai2024}

%% Some suggested packages, as needed:
\usepackage{natbib} % has a nice set of citation styles and commands
    \bibliographystyle{plainnat}
    
\usepackage{mathtools} % amsmath with fixes and additions
\usepackage{booktabs} % commands to create good-looking tables

\DeclareMathOperator{\E}{\mathbb{E}}
\usepackage[format=plain]{caption}
\usepackage{subcaption}
\usepackage[british]{babel}

% \title{Full Title of Article}
%\title[Spillover Detection for Donor Selection in Synthetic Control Models]{Spillover Detection for Donor Selection in Synthetic Control Models}
\title{Spillover Detection for Donor Selection in Synthetic Control Models}

\usepackage{booktabs}
\usepackage{times}

\usepackage{algorithm}
\usepackage{algorithmicx}
\usepackage[noend]{algpseudocode}

\usepackage{cleveref}
\usepackage{amsmath,amsfonts,amssymb,amsthm,mathtools}
\usepackage{tikz}
\usetikzlibrary{shapes.geometric}
\usepackage{multicol}
\newtheorem{thm}{Theorem}[section]
\newtheorem*{thm*}{Theorem}
\newtheorem{define}{Definition}[section]
%\newtheorem*{rmk}{Remark}
% Use \Name{Author Name} to specify the name.
% If the surname contains spaces, enclose the surname
% in braces, e.g. \Name{John {Smith Jones}} similarly
% if the name has a "von" part, e.g \Name{Jane {de Winter}}.
% If the first letter in the forenames is a diacritic
% enclose the diacritic in braces, e.g. \Name{{\'E}louise Smith}

% Two authors with the same address
% \clearauthor{\Name{Author Name1} \Email{abc@sample.com}\and
%  \Name{Author Name2} \Email{xyz@sample.com}\\
%  \addr Address}

% Three or more authors with the same address:
% \clearauthor{\Name{Author Name1} \Email{an1@sample.com}\\
%  \Name{Author Name2} \Email{an2@sample.com}\\
%  \Name{Author Name3} \Email{an3@sample.com}\\
%  \addr Address}

% Authors with different addresses:
%\clearauthor{%
% \Name{Michael O'Riordan} \Email{moriordan@spotify.com}\\
% \addr Spotify %
%  \AND
% \Name{Ciar\'{a}n M. Gilligan-Lee} \Email{ciaran.lee@ucl.ac.uk}\\
% \addr Spotify and University College London%
%}

\author[1]{\href{mailto:<moriordan@spotify.com>}{Michael O'Riordan}{}}
\author[1,2]{\href{mailto:<ciaran.lee@ucl.ac.uk>}{Ciar\'{a}n M. Gilligan-Lee}{}}
% Add affiliations after the authors
\affil[1]{%
    Spotify
}
\affil[2]{%
    University College London
}

% Don't allow footnotes to break over pages
\interfootnotelinepenalty=10000

\begin{document}
\maketitle

\begin{abstract}%
Synthetic control (SC) models are widely used to estimate causal effects in settings with observational time-series data. To identify the causal effect on a target unit, SC requires the existence of correlated units that are not impacted by the intervention. Given one of these potential donor units, how can we decide whether it is in fact a \emph{valid} donor---that is, one not subject to spillover effects from the intervention? Such a decision typically requires appealing to strong \emph{a priori} domain knowledge specifying the units, which becomes infeasible in situations with large pools of potential donors. In this paper, we introduce a practical, theoretically-grounded donor selection procedure, aiming to weaken this domain knowledge requirement. Our main result is a Theorem that yields the assumptions required to identify donor values at post-intervention time points using only pre-intervention data. We show how this Theorem---and the assumptions underpinning it---can be turned into a practical method for detecting potential spillover effects and excluding invalid donors when constructing SCs. Importantly, we employ sensitivity analysis to formally bound the bias in our SC causal estimate in situations where an excluded donor was indeed valid, or where a selected donor was invalid. Using ideas from the proximal causal inference and instrumental variables literature, we show that the excluded donors can nevertheless be leveraged to further debias causal effect estimates. Finally, we illustrate our donor selection procedure on both simulated and real-world datasets.
\end{abstract}

%\begin{keywords}%
%  Synthetic Control, Sensitivity Analysis, Structural Causal Models
%\end{keywords}

\section{Introduction}

The ability to estimate causal effects is of fundamental importance in many domains, including medicine, economics, and industry 
\citep[see e.g.][]{lee2017causal,gilligan2020causing,richens2020improving, dhir2020integrating,perov2020multiverse,vlontzos2021estimating,gilligan2022leveraging, jeunen2022disentangling,reynaud2022d,van2023estimating}. 
In the absence of experimental data from randomised controlled trials or A/B tests, practitioners are faced with observational (non-randomised) data, and must rely on the assumptions, tools, and techniques of causal inference to estimate the impact of interventions. 
Synthetic control (SC) models, first introduced more than 20 years ago by \cite{abadie2003economic}, are widely used to estimate treatment effects in settings with observational \emph{time-series} (panel) data \citep[e.g.][]{abadie2010synthetic, abadie2015comparative, brodersen2015inferring, kreif2016examination}, and have been described by \cite{athey2017state} as ``arguably the most important innovation in the policy evaluation literature in the last 15 years''. 

To estimate the impact of an intervention on a target unit, SC requires time-series data for the target unit as well as time-series data for other units correlated with the target, often called donors. Crucially, the donors must not be impacted by the intervention. That is, there must no spillover effects from the intervention to the donors. The SC method uses pre-intervention data to construct a \emph{synthetic control unit} that matches the pre-intervention target as closely as possible.
The post-intervention evolution of this SC unit estimates the evolution of the target unit in a counterfactual world where the intervention did not occur, all else being equal. Therefore, the causal impact of the intervention can be estimated by comparing the observed, factual, post-intervention target to the counterfactual one.

Given time-series data for a potential donor, how can we determine that it is \emph{not} impacted by the intervention, and hence constitutes a \emph{valid} donor that can be used in the construction of SCs? Such a decision typically requires appealing to strong \emph{a priori} domain knowledge about the nature of the intervention and donors. Usually, we must already know (or assume) that the entire pool of potential donors is not subject to spillover effects from the intervention. However, in many real-world applications the pool of potential donors can be very large, such as when estimating the impact of a new feature on a large online platform \citep[][Section 5]{lin2023balancing}, and domain knowledge alone is unlikely to be adequate for donor selection due to the scale of the problem. 

In this work, we aim to relax the domain knowledge requirements by introducing a practical, theoretically-grounded donor selection procedure. This procedure can be used to augment partial knowledge about invalid donors, thereby allowing us to rely on a weaker form of domain knowledge than usually required to select valid donors.
We also relate the failure modes of our selection procedure to recent advances in the sensitivity analysis frameworks for SC \citep{zeitler2023non} and negative control \citep{miao2020confounding}, 
giving formal bounds on the bias for a given selection of donors, and further reducing the burden on domain knowledge to select a perfect set of valid donors. 

Our main result is a Theorem---based on techniques from proximal causal inference---that provides the assumptions required to identify and forecast values of specific donors at post-intervention time points using pre-intervention donor data only. This is in contrast to the standard SC identifiability, which additionally uses post-intervention donor values to predict the post-intervention counterfactual for the target unit.
%necessary for SC identifiability also facilitate using pre-intervention donor data to forecast the values of specific donors at post-intervention time points. 
The main assumptions required for a given donor's post-intervention values to be identified are that the donor is not impacted by the intervention, and the distribution of the underlying data generating mechanism in the past is representative of future data points. Therefore, if we use pre-intervention data to predict post-intervention data and get the wrong answer, either the underlying latent distribution has changed, the donor has been impacted by spillover effects from the intervention, or both. 

We use this result to detect potential spillover effects and exclude donors when constructing SCs. Importantly, we formally bound the potential bias introduced by this selection procedure due to false positives (\emph{excluding} donors \emph{not impacted} by spillover effects) and false negatives (\emph{including} donors \emph{impacted} by spillover effects) using sensitivity analysis. While the excluded donors aren't used in constructing the SC, we show how they can still be leveraged to further debias causal effect estimates in situations where the donors are noisy proxies of the latent dynamics \citep{shi2023theory}.

We conclude the paper by providing an empirical demonstration of our donor selection procedure on both simulated and real-world datasets.

In summary, our main contributions are as follows:
\begin{enumerate}
\item We prove a Theorem that yields the assumptions required to identify donor values at post-intervention time points using only pre-intervention data.
\item We introduce a practical donor selection procedure using this Theorem, that detects potential spillover effects and excludes invalid donors when constructing SCs.
\item We use sensitivity analysis to formally bound the bias in our SC causal estimate for a given selection of donors in situations where an excluded donor was indeed valid, or where a selected donor was invalid.
\item We provide a two-stage method that uses the excluded donors to further debias causal effect estimates.
\item We illustrate the performance of our donor selection procedure on both simulated and real-world datasets.
\end{enumerate}
    
\section{Related work}

\paragraph{\textbf{Identifiability of Synthetic Controls}} 
Historically, SC identifiability relied on assuming that the data generating process can be modelled as a latent linear factor model. With this assumption, the counterfactual is identified as a linear combination of valid donors. For instance, see \cite{abadie2003economic,abadie2010synthetic,abadie2015comparative}, and extensions of these approaches utilising Bayesian structural time-series by \cite{brodersen2015inferring}.

More recently, \cite{shi2022assumptions} argued that linearity emerges in a non-parametric manner if the target and donor units are in fact aggregations of ``smaller'' units (e.g. country-level data are aggregates of individual-level behaviours). 
The need for this ``aggregate unit'' assumption was subsequently removed by \cite{shi2023theory} and \cite{zeitler2023non}, who leveraged proximal causal inference to prove that the counterfactual can be non-parametrically identified as a (potentially non-linear) function of valid donors.

\paragraph{\textbf{Proximal Causal Inference}}
Proximal causal inference was initially investigated by \cite{kuroki2014measurement,miao2018identifying}, and has been further developed by \cite{tchetgen2020introduction}. It has been used, for instance, in long-term causal effect estimation by \cite{imbens2022long}, and formulated in terms of the graphical causal inference framework by \cite{shpitser2021proximal}. 
In the context of SC models, \cite{shi2023theory} and \cite{zeitler2023non} employed proximal causal inference to prove non-parametric identifiability.

Recently, \cite{liu2023proximal} proposed a novel SC model leveraging special donors called ``surrogates'', which are correlates of the causal effect itself, and can include potentially invalid donors.
Assuming the existence of such surrogates (as well as the requisite domain knowledge to identify them), \cite{liu2023proximal} demonstrated SC estimation based on post-intervention data alone.

\paragraph{\textbf{Sensitivity Analysis}} 
Later in the paper, we discuss scenarios where our donor selection procedure fails, and relate these to recent works on sensitivity analysis in SC \citep{zeitler2023non} and negative control \citep{miao2020confounding}. Sensitivity analysis in the causal inference literature has mainly been focused on investigating omitted variable bias in propensity-based models. This line of work originated in \cite{rosenbaum1983assessing} and \cite{imbens2003sensitivity}, with modern formulations provided by \cite{veitch2020sense} and \cite{cinelli2020making,cinelli2019sensitivity}. 

In the context of SC models, sensitivity analysis has been investigated by \cite{zeitler2023non} and \cite{nazaret2023misspecification}. The work by \cite{zeitler2023non} explored a formal framework for sensitivity to violations of identifiability of the full SC model, in particular, sensitivity to the existence of relevant latent variables with no observed proxies. \cite{nazaret2023misspecification}, on the other hand, explored mis-specifications to the standard linearity assumption in SC models. Both of these works assumed valid donors (i.e.\ no donors impacted by spillover effects from the intervention). 

\cite{miao2020confounding} investigated sensitivity analysis in the context of negative control \citep[widely used in epidemiological research, negative control variables can be viewed as specific proxy types within the proximal causal inference framework][]{tchetgen2020introduction,shi2022selective}. In particular, \citet{miao2020confounding} introduced \emph{positive} control outcomes (which are reminiscent of invalid donors in SC models), and discussed treating the (unknown) spillover effects as sensitivity parameters to evaluate the plausibility of causal estimates.

\section{Methods}

\subsection{Synthetic Control Structural Causal Model} \label{section: Pearl SCM}

\begin{figure*}[t]
\centering
\begin{subfigure}[b]{0.55\textwidth}
       \includegraphics[scale=0.7]{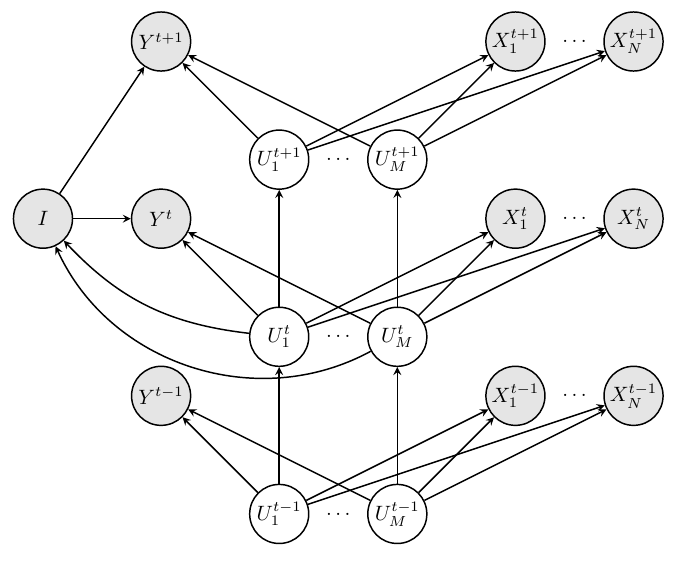}
       \caption{}
       \label{fig:syn_cont}
       \end{subfigure}
\begin{subfigure}[b]{0.4\textwidth}
    \includegraphics[scale=0.8]{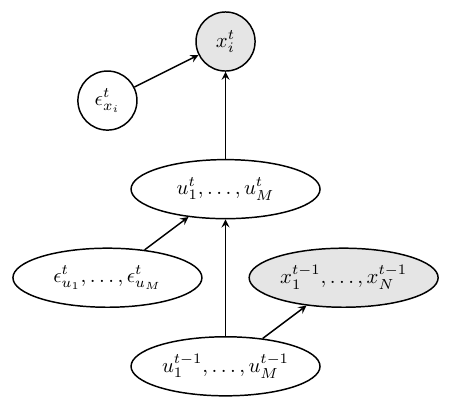}
    \caption{}
    \label{fig:forecast_donor_proxy}
    \end{subfigure}
       \caption{(a) SC DAG. Grey nodes are observed variables, white latent. The intervention $I$ is applied at time point $t$, and is taken to be $0$ for all time points before $t$, and $1$ for all time points from $t$ onwards. The noise terms for the target $Y$, donors $X$, and latents $U$ have been suppressed for ease of exposition. Note that autocorrelation in the target and donor time-series arises due to the causal links between the latents at different time points. As discussed in Appendix~\ref{section: proof}, the model and results can be readily generalised to the case where the evolution depends on $T$ time points. (b) Donor forecast. At time point $t-1$, the donors $x_1^{t-1},\dots,x_N^{t-1}$ are proxies for the latents $u_1^{t-1},\dots,u_M^{t-1}$, which allows us to write $x_i^t$ as a function of $x_1^{t-1},\dots,x_N^{t-1}$ and the noise terms at time $t$. The noise terms at time $t-1$ have been suppressed for ease of exposition.
       }
\end{figure*}

In Def.~\ref{definition: scscm} \citep[first provided by][]{zeitler2023non}, we formally define SC models in the structural causal model (SCM) framework of \cite{pearl2009causality}. 
    
    \begin{define}[Synthetic Control Structural Causal Model]\label{definition: scscm}
    SC structural causal models (SCSCMs) consist of \textbf{a set of $\mathbf{M}$ latent variables $\mathbf{U}$} and their distributions, \textbf{a set of $\mathbf{N}$ observed variables $\mathbf{X}$} representing the donor units, \textbf{a set of observed variables $\mathbf{Y, I}$} representing the target unit, and the intervention, and \textbf{a set of deterministic functions} mapping parents to their children in the causal structure in Figure~\ref{fig:syn_cont}, represented as a directed acyclic graph (DAG), each indexed by a specific time point $t$, such that
       \begin{enumerate}
           \item $u_i^t=m_i^t(u_i^{t-1}, \epsilon^t_{u_i})$ 
           \item $x^t_i=f_i^t(u_1^t,\dots,u_M^t, \epsilon^t_{x_i})$ 
           \item $y^t=g^t(u_1^t,\dots,u_M^t, I^t, \epsilon^t_{y})$ 
       \end{enumerate}
        where $\epsilon^t_{u_i}\sim P(\epsilon^t_{u_i})$,\, 
        $\epsilon^t_{x_i}\sim P(\epsilon^t_{x_i})$,\, and 
        $\epsilon^t_{y}\sim P(\epsilon^t_{y})$ are independent, exogenous error terms.  
    \end{define}
For simplicity, we sometimes follow \cite{zhang2022can} and suppress the functional dependence on the exogenous error terms for the latents $u_1,\dots,u_M$. We also often drop the indices on the donors and latents as follows: $x^t:=x_1^t,\dots,x_N^t$, $u^t:=u_1^t,\dots,u_M^t$, $\epsilon_x^t:=\epsilon_{x_1}^t,\dots,\epsilon_{x_N}^t$, and $\epsilon_u^t:=\epsilon_{u_1}^t,\dots,\epsilon_{u_M}^t$. 

This definition of SCSCMs generalises the standard latent linear factor model formulation of SCs \cite{abadie2010synthetic}. In particular, notice that the target $y^t$ and donors $x^t$ can be arbitrary functions of the latents $u^t$. We sometimes also use $y^t$ and $x^t$ to denote the \emph{values} these variables take, however, the difference will be clear from the context. Note that autocorrelation in the target and donor time-series arises due to the causal links between the latents at different time points. As discussed in Appendix~\ref{section: proof}, the definition of SCSCMs and our results can be readily generalised to the case where evolution depends on $T$ time points. 

We finish this section by defining the proxy variable completeness condition (Def.~\ref{definition: completeness}), and invariant causal mechanisms (Def.~\ref{definition: inv causal mech}), which are necessary for SC non-parametric identifiability, and form the basis for our Theorem~\ref{theorem: forecast}. 

The following completeness condition formally defines when a set of donors can be considered \emph{proxies} for a set of latent variables \citep{tchetgen2020introduction,zeitler2023non}.

    \begin{define}[Completeness Condition]\label{definition: completeness}
       For any square-integrable function $f$, if  $\E\left(f(x^t_1, \dots, x^t_N) \mid u^t_1, \dots, u^t_M \right) = 0,$ then $f(x^t_1, \dots, x^t_N)=0$ for any $t$.
    \end{define}
At a given time point, the completeness condition characterises how much ``information'' the donors have about the latent variables, in the sense that any variation in the $u$'s is captured by variation in $x$'s. For the rest of the paper, we assume that the set of donors $x^t_1,\dots, x^t_N$ can be treated as proxies for the latents $u^t_1,\dots, u^t_M$.

A \emph{causal mechanism} is the deterministic function that uniquely specifies a variable from its parents in the causal graph, and is equivalent to the conditional distribution of that variable given its (latent and observed) parents. We assume that causal mechanisms are \emph{invariant} (Def.~\ref{definition: inv causal mech}).

\begin{define}[Invariant Causal Mechanism]\label{definition: inv causal mech}
       A causal mechanism is invariant if it doesn't depend on time point $t$.
    \end{define}

The SCM formulation allows us to define (strong) interventions via the \emph{do-operator}, disconnecting the intervened variable from its parents in the causal graph, and assigning to it a specific value \citep{pearl2009causality}. Thus, we quantify the impact of an intervention $I$ on the target $y$ at time $t$ as
\begin{multline}
\label{eq:sc_att}
\tau=\underbrace{\E\left(y^t \mid \text{do}(I^t=1), I^t=1 \right)}_{\text{Observed}} \\-\,  
\underbrace{\E\left(y^t \mid \text{do}(I^t=0), I^t=1 \right)}_{\text{Counterfactual}}
\end{multline}
The first term is the observed, factual, post-intervention target, whereas the second term is the unobserved, counterfactual one. 
For SCSCMs, \cite{zeitler2023non} showed that if causal mechanisms are invariant, and if the donors are proxies for the latents, then 
the counterfactual is identified via a unique function $h$ of the (valid) donors such that $\E\left(y^t \mid \text{do}(I^t=0), I^t=1 \right)= \E\left(h(x^t, I^t=0) \right)$.

\subsection{Spillover Detection for Selecting Valid Donors}

The power of SC lies in the fact that the counterfactual in Eq.~\ref{eq:sc_att} can be identified even when there are post-intervention shifts in the exogenous errors $P(\epsilon_u^t)$ for the latents $u$.
However, to identify the counterfactual we still need to specify donor units $x$ that are \emph{valid}. A donor is valid if it adheres to the DAG depicted in Figure~\ref{fig:syn_cont}, and remains a proxy for the latents (Def.~\ref{definition: completeness}) at all time points. In particular, it must not be impacted by spillover effects from the intervention. Given an SCSCM from Def.~\ref{definition: scscm},  spillover effects manifest as post-intervention shifts in the donor errors $P(\epsilon_x^t)$.

Usually, one must appeal to strong \emph{a priori} domain knowledge in order to select valid donors.
Can we leverage data from the donor pool itself, augmenting incomplete domain knowledge, to gain confidence that a potential donor is valid?
Ideally, for a given donor candidate $x_i$, we would like to be able to use pre-intervention data to test for shifts in $P(\epsilon_{x_i}^t)$ that would rule out $x_i$ as a valid donor.
Such a procedure would allow us to select donors based on a weaker form of domain knowledge than that required to know exactly which donors adhere to the DAG.
This issue of selecting \emph{valid} donors is distinct from the usual considerations of selecting donors based on pre-treatment fit \citep[see e.g.,][]{ben-michael_2021}, and is in fact a prerequisite.

In Section~\ref{subsubsection: donor selection procedure}, we provide a Theorem showing that the assumptions necessary for SC non-parametric identifiability---the proxy variable completeness condition (Def.~\ref{definition: completeness}), and invariant causal mechanisms (Def.~\ref{definition: inv causal mech})---also facilitate forecasting donor values given additional assumptions.
We use this Theorem to introduce a practical, theoretically-grounded donor selection procedure based on detecting potential spillover effects to identify invalid donors.

\subsubsection{Donor Selection Procedure: Theory}
\label{subsubsection: donor selection procedure}
  
  \begin{thm}\label{theorem: forecast}
   If causal mechanisms are invariant, and the donors $x^{t-1}_1,\dots,x^{t-1}_N$ are proxies for the latents $u^{t-1}_1, \dots, u^{t-1}_M$ then, for each donor $x_i$, there exists a unique function $h_i$ such that for all time points $t$ we have: 
   \begin{equation}
   \label{eq:donor forecast}
   \E\left(x^t_i \right) = \E\left(h_i(x^{t-1}_1, \dots, x^{t-1}_N, P(\epsilon^t_{x_i},\epsilon^t_u)) \right)
   \end{equation}
  \end{thm}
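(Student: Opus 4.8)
The plan is to mirror the proximal identification argument that \cite{zeitler2023non} use for the target counterfactual, but applied one step forward in time to a single donor, with the lagged donors $x^{t-1}$ playing the role of the proxies. The key structural observation (Figure~\ref{fig:forecast_donor_proxy}) is that, conditional on $u^{t-1}$, the lagged donors $x^{t-1}$ and the future donor $x_i^t$ are independent: each $x_j^{t-1}$ is a function of $u^{t-1}$ and $\epsilon_{x_j}^{t-1}$, while $x_i^t = f_i(m(u^{t-1}, \epsilon_u^t), \epsilon_{x_i}^t)$ depends on $u^{t-1}$ only through the invariant mechanisms plus the fresh, independent noise $\epsilon_u^t, \epsilon_{x_i}^t$. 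This conditional independence is exactly what lets $x^{t-1}$ serve as a valid proxy for $u^{t-1}$ when predicting $x_i^t$.

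First I would invoke invariance of the causal mechanisms (Def.~\ref{definition: inv causal mech}) to substitute $u^t = m(u^{t-1}, \epsilon_u^t)$ into $x_i^t = f_i(u^t, \epsilon_{x_i}^t)$, and then integrate out the time-$t$ noise to define
\[
\psi_i(u^{t-1}) := \E\left(x_i^t \mid u^{t-1}\right) = \E_{\epsilon_u^t, \epsilon_{x_i}^t}\left[f_i\left(m(u^{t-1}, \epsilon_u^t), \epsilon_{x_i}^t\right)\right].
\]
Note that $\psi_i$ depends on the noise law $P(\epsilon_{x_i}^t, \epsilon_u^t)$, which is where that argument of $h_i$ ultimately enters. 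By the tower property, $\E(x_i^t) = \E_{u^{t-1}}[\psi_i(u^{t-1})]$, so it suffices to re-express $\psi_i(u^{t-1})$ as the conditional mean, given $u^{t-1}$, of some function of the observed lagged donors.

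The core step is to produce a bridge function $h_i$ solving the Fredholm integral equation of the first kind
\[
\E\left(h_i\left(x^{t-1}, P(\epsilon_{x_i}^t, \epsilon_u^t)\right) \mid u^{t-1}\right) = \psi_i(u^{t-1}),
\]
that is, $\int h_i(x^{t-1}, \cdot)\, p(x^{t-1} \mid u^{t-1})\, dx^{t-1} = \psi_i(u^{t-1})$. Given such an $h_i$, the tower property gives $\E(h_i(x^{t-1}, \cdot)) = \E_{u^{t-1}}\left[\E(h_i \mid u^{t-1})\right] = \E_{u^{t-1}}[\psi_i(u^{t-1})] = \E(x_i^t)$, which is Eq.~\ref{eq:donor forecast}. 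Uniqueness is then immediate from the completeness condition (Def.~\ref{definition: completeness}) applied at time $t-1$: if $h_i$ and $h_i'$ both solve the equation, their difference $\delta$ satisfies $\E(\delta(x^{t-1}) \mid u^{t-1}) = 0$, hence $\delta(x^{t-1}) = 0$.

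I expect \emph{existence} of the bridge function $h_i$ to be the main obstacle. Injectivity of the conditional-expectation operator $f \mapsto \E(f(x^{t-1}) \mid u^{t-1})$ is precisely the completeness condition and handles uniqueness, but existence requires that $\psi_i$ lie in the range of this operator---a surjectivity/regularity requirement (a Picard-type condition on the integral equation). I would discharge this by importing the same bridge-function existence assumption underlying the proximal SC identification result of \cite{zeitler2023non}, observing that the only change is that the proxy is now $x^{t-1}$ and the quantity of interest is the forward donor value $x_i^t$ rather than the target counterfactual. Crucially, invariance guarantees that both the operator and the target $\psi_i$ are the same objects at every time point $t$, so a single $h_i$ (with the noise law as an explicit argument) satisfies Eq.~\ref{eq:donor forecast} for all $t$.
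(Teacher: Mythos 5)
Your proposal is correct and follows essentially the same route as the paper's proof: an integral (bridge-function) equation relating the conditional law of $x_i^t$ given $u^{t-1}$ to the lagged donors, existence imported from the proximal SC results of \cite{zeitler2023non} and \cite{shi2023theory}, uniqueness from the completeness condition, and time-independence from invariance of the causal mechanisms plus the Markov factorisation of $P(x^{t-1}\mid u^{t-1})$. The only cosmetic difference is that the paper poses the bridge equation at the level of the conditional density $P^t(x_i^t\mid u^{t-1},\epsilon^t_i)$ via a kernel $H_i^t$ and then derives $h_i$ by integrating out $x_i^t$ and the noise, whereas you pose it directly for the conditional mean $\psi_i(u^{t-1})$; you also correctly identify that completeness gives only injectivity, so existence is the step requiring the external assumption.
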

The proof of this Theorem can be found in Appendix~\ref{section: proof}. 
Intuitively, because $x_i^t$ is a function of the latents at time point $t-1$, and because the donors at $t-1$ are proxies for those latents, we can swap the latents for the donors and write $x_i^t$ as a function of $x_1^{t-1},\dots,x_N^{t-1}$ and the exogenous error terms at time $t$ (see also Figure~\ref{fig:forecast_donor_proxy}).

We now use Theorem~\ref{theorem: forecast} to identify and exclude potentially invalid donors. Under conditions that a) donor $x_i$ is valid, and b) the exogenous error distributions $P\bigl(\epsilon_u^t\bigr)$ for the latents have not shifted at time $t$ relative to their pre-intervention values, then Theorem~\ref{theorem: forecast} implies that we can forecast post-intervention values for $x_i$ based on pre-intervention donor data alone. Conversely, failing to forecast $x_i^t$ implies the violation of condition a), b), or both. This observation forms the basis of our spillover detection method.

To use this theoretical result to get a practical method for flagging invalid donors, we assume that forecast errors are due to the donor unit error terms (due to being impacted by the intervention) and not the latent error terms. Crucially, we assume that failing to forecast $x_i^t$ means that condition a) is violated---that $P\bigl(\epsilon_{x_i}^t\bigr)$ has shifted\footnote{For example, $\epsilon_{x_i}^{t-1}\sim\mathcal{N}\bigl(0, \sigma_{x_i}\bigr)$, and $\epsilon_{x_i}^t\sim\mathcal{N}\bigl(\tau_{x_i},\sigma_{x_i}\bigr)$, where $\tau_{x_i}$ is the spillover effect on donor $x_i$.} relative to pre-intervention values---and $x_i$ is not a valid donor. 

In general, without additional domain knowledge, we can’t be sure that the unpredictability of $x_i$ is due to spillover effects and not changes in $u_i$. However, in Section~\ref{sensitivity analysis sec} we employ sensitivity analysis to formally bound the bias in our SC causal estimate violations to this assumption – ie. false positives, when we exclude a donor $x_i$ even though it was the latent $u_i$ that changed, and false negatives, when we include a donor $x_i$ even though it was impacted by spillover effects. In particular, when our spillover detection incorrectly flags a donor $x_i$ (which is a false positive), then excluding this donor will only introduce bias if the remaining selected donors don’t satisfy the completeness condition (i.e. there is omitted variable bias). This sensitivity analysis is the bridge that brings us from the theory of Theorem~\ref{theorem: forecast} to our practical donor selection method in Algorithm~\ref{alg}. 

As we demonstrate in Section~\ref{sec: experiments}, violations of condition b)---shifts in the latents---are only an issue for the selection method if they occur at the same time point as the donor forecast. If the latents shift later in the post-intervention period, this does not bias donor selection, as this later post-intervention data is not used in our selection procedure. Lags between the intervention and spillover effects can be dealt with by forecasting on coarsened, time-averaged donor data\footnote{For example, weekly averages instead of daily. Note that the time averaging should not combine pre- and post-intervention data.}. Furthermore, time averaging can reduce false negatives in cases with very noisy donors (see Figure~\ref{fig:spillover_time_averaged}). However, the longer time windows also increase the risk of false positives due to potential shifts in the latent distributions.

For the rest of the paper, we restrict our attention to \emph{linear} SC models, both because of their ubiquity, and also, more practically, to be able to discuss concrete sensitivity analysis bounds \citep{miao2020confounding,zeitler2023non}.
We make no distinction, however, between linear SCs as a consequence of the target and donors being linear functions of the latents, and linearity emerging non-parametrically due to the ``aggregate unit'' assumption of \cite{shi2022assumptions}.
For demonstration purposes, we also focus on simple linear models for estimating $h_i$ as part of the donor selection procedure, although this is \emph{not} a strict requirement (even when restricting to linear SC models, and linear sensitivity analysis). In principle, we can swap the linear model in this procedure for any flexible machine-learning model, using e.g.\ conformal inference for constructing calibrated prediction intervals, perhaps improving performance \citep[see][for applications of conformal inference in the context of SC models]{chernozhukov2021exact, shi2023theory}.
%However, it should be noted that---following the assumptions of Theorem~\ref{theorem: forecast}---linear models are likely a reasonable choice for the spillover detection procedure in scenarios where linear SC models identify the causal effect.

\subsubsection{Donor Selection Procedure: Practical Method}
\label{subsubsection: practical donor selection procedure}

In Algorithm~\ref{alg}, we present pseudocode for our spillover detection procedure.
The normalisation step is important for ensuring that the procedure remains invariant to the scale of the donors. 

\begin{algorithm}[ht]
\caption{Spillover Detection For Candidate Donor $x_i$}\label{alg}
%\vspace{10pt}
\textbf{Inputs:}\\ 
\begin{itemize*}
\item Training data and labels $\{x_1^{t'-1},\dots,x_N^{t'-1};\,\,x_i^{t'}\}$ for all pre-intervention time points $1<t'<t$ \\
\item Test data and label $\{x_1^{t-1},\dots,x_N^{t-1};\,\,x_i^t\}$\\
\item Posterior predictive interval (PPI) bound $\phi$ (e.g.\ 80\%)\\
\end{itemize*}
\textbf{Outputs:}\\
\begin{itemize*}
\item Procedure $S1$: Prediction absolute error $\bigl|x^t_i-\widehat{x}^t_i\bigr|$\\
\item Procedure $S2$: $0$ if $x^t_i$ is inside $\phi$ PPI, else $1$
\end{itemize*}
\begin{algorithmic}[1]
\State Normalise the data and labels
\State Regress $x_i^{t'}$ on $x_1^{t'-1},\dots, x_N^{t'-1}$ $\,\forall\,t'<t$ to obtain $\widehat{h}_i$
\State Predict $\widehat{x}_i^t,\,[\widehat{x}_{i,-}^t,\, \widehat{x}_{i,+}^t]=\widehat{h}_i\left(x_1^{t-1},\dots,x_N^{t-1};\,\phi\right)$
\State Set $A=\bigl|x^t_i-\widehat{x}^t_i\bigr|$
\State Set $B=0$ if $\widehat{x}_{i,-}^t<x_i^t<\widehat{x}_{i,+}^t$, else $B=1$
\State Output $A$ for selection procedure $S1$
\State Output $B$ for selection procedure $S2$
\end{algorithmic}
\end{algorithm}

We assume the following linear model for the forecast
\begin{equation}
\label{eq:linear donor forecast}
    x_i^{t'}\sim\mathcal{N}\left(\alpha_i+\beta_{ij}\,x_j^{t'-1},\,\sigma_{x_i}\right)
\end{equation}
where $1 < t'<t$, the index $j$ runs from $1$ to $N$, 
and we use the Einstein summation convention such that a repeated index implies summation $\beta_{ij}\,x_j^{t'-1} := \sum_j \beta_{ij}\,x_j^{t'-1}$.
Note that the coefficients $\alpha_i$ and $\beta_{ij}$ are the same for different time points, encoding the fact that $h_i$ should be independent of time.
In Section~\ref{subsection: simulated data}, we demonstrate the following two approaches, $S1$ and $S2$, for selecting donors based on Algorithm~\ref{alg}, using the regression model in Eq.~\ref{eq:linear donor forecast}:
\begin{itemize}
\item $S1$: Select donors with the smallest difference between their actual and predicted values at the time of intervention, i.e.\ $\min_{x_i}\bigl(\bigl|x^t_i-\beta_{ij}\,x_j^{t-1}\bigr|\bigr)$.
\item $S2$: Select donors with values $x^t_i$ falling within some specified posterior predictive intervals (e.g.\ 80\%).
\end{itemize}

\subsection{Using Excluded Donors To Debias Linear Synthetic Control Models}\label{subsection: linear sc models}
While the donors excluded by Algorithm~\ref{alg} aren’t used in constructing the SC counterfactual, they can still be used to debias SC causal effect estimates, as we now show. The experiments in Section~\ref{sec: experiments} use the following linear model for constructing SCs based on pre-intervention time points.
\begin{equation}
\label{eq: linear sc model}
y^t \sim \mathcal{N}\left(\alpha+\beta_i\,x_i^t\,,\,\sigma_y\right)
\end{equation}
As discussed by \cite{shi2023theory}, in situations where the donors are noisy (imperfect) proxies of the latents, the SC model in Eq.~\ref{eq: linear sc model} might fail to estimate consistent donor weights $\beta_i$. Using proximal causal inference \citep{tchetgen2020introduction,shi2023theory}, we can leverage proxies not used in the construction of the SC to debias the estimates. In the proximal causal inference literature, this debiasing is typically accomplished with two-stage least squares estimation, or the generalised method of moments.
For the experiment in Figure~\ref{fig:proxy_debiased}, we opt to jointly model the target and donors as
\begin{equation}
\label{eq:mv_normal_proxies}
\begin{pmatrix}y^t\\ x^t_1 \\ \vdots \\ x^t_N\end{pmatrix}\sim  
\mathcal{N}\left(
\begin{pmatrix}
\alpha+\beta_i\,x^t_i\\ 
\gamma_1+\lambda_{1i}\,z^t_i \\ \vdots \\ \gamma_N+\lambda_{Ni}\,z^t_i
\end{pmatrix},\, \Sigma
\right)
\end{equation}
where $z_i$ are the (potentially invalid) donors excluded by our selection procedure. 
Such a model can be readily estimated using a probabilistic programming language.

How is it possible that we can use excluded donors $z_i$, potentially impacted by spillover effects from the intervention, when constructing debiased SC estimates? The key point is that we only ever use pre-intervention data from these excluded donors, and so the SC estimates are unaffected by the post-intervention dynamics of the donors $z_i$. 

To build intuition about the above procedure, consider the following model. $Y$ is analogous to the target variable, $U$ the latent, and $X$ the donor (ie. a noisy proxy of the latent $U$). We would like to estimate the donor weight $\gamma$. Because $U$ is latent, we can’t estimate the expectation on the right hand side. If we regress $Y$ on $X$ we won’t get a consistent estimate of $\gamma$ because the errors are correlated with $X$. 

$$
\begin{aligned}
&\mathbb{E}(Y|U) =\alpha\,U, \quad \mathbb{E}(X|U) =\beta\,U, \\ 
&\implies \mathbb{E}(Y|U) =\gamma\,\mathbb{E}(X|U)
\end{aligned}
$$ 

Instead, if we have an additional proxy variable $Z$, we can rewrite the model as

$$
\begin{aligned}
&\mathbb{E}(Y|Z) =\alpha\,\mathbb{E}(U|Z), \quad  \mathbb{E}(X|Z) =\beta\,\mathbb{E}(U|Z), \\ &\implies \mathbb{E}(Y|Z) =\gamma\, \mathbb{E}(X|Z)
\end{aligned}$$

In this case, $Z$ is observed and so we can estimate the expectation on the right hand side, and then get a consistent estimate for the donor weight $\gamma$. One approach for estimation (often used for estimating instrumental variables) is a two-stage estimator where we first regress $X$ on $Z$ to estimate $\hat{X}$ (sometimes referred to as a proximal control variable), and then regress $Y$ on $\hat{X}$. The multivariate model in equation 5 above effectively combines this two-stage process into a single model 
\citep[similar models can be used for estimating instrumental variables, see e.g.][Section 23.4]{gelman_hill_2006}.

\begin{figure*}[t]
    \centering
    \includegraphics[scale=0.55]{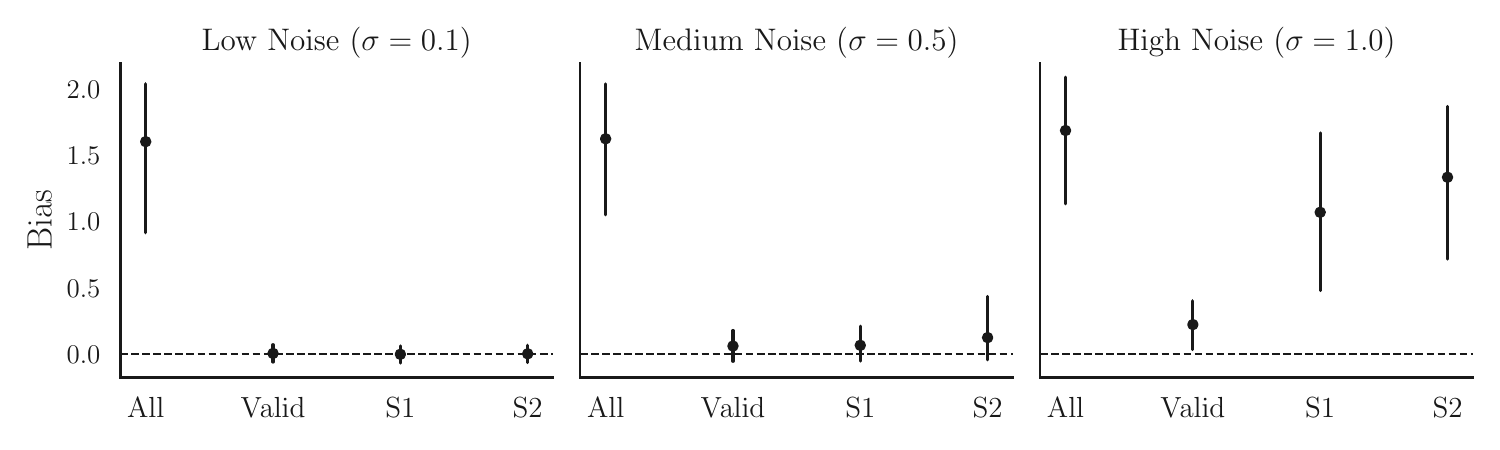}
    \caption{Bias $\widehat{\tau}-\tau$ and 95\% Monte Carlo confidence intervals
    for 2000 simulated datasets. The data generating process is described in Appendix~\ref{section: sim data}. 
    To construct SCs, we must identify the subset of donors that are valid. The horizontal axis shows the procedure used to identify this potentially valid set. 
    For comparison, the case labelled \emph{All} shows the expected bias of $1.6$ when we assume that all donors are valid, and the case labelled \emph{Valid} shows the bias when we have perfect knowledge about which donors are valid.
    Our $S1$ and $S2$ donor selection procedures are described in Section~\ref{subsubsection: donor selection procedure}, and Algorithm~\ref{alg}.
    The standard deviation of the donor noise term, $\epsilon_x^t\sim\mathcal{N}\left(0, \sigma\right)$, increases from left to right. As the noise increases, the spillover detection procedure is more likely to return false negatives, which increases the bias due to invalid donors. Note that even with optimal selection of valid donors, the estimates can still be biased due to donor noise.
    }
    \label{fig:sims_low_med_high_noise}
\end{figure*}

\begin{figure}[t]
    \centering
    \includegraphics[width=\linewidth]{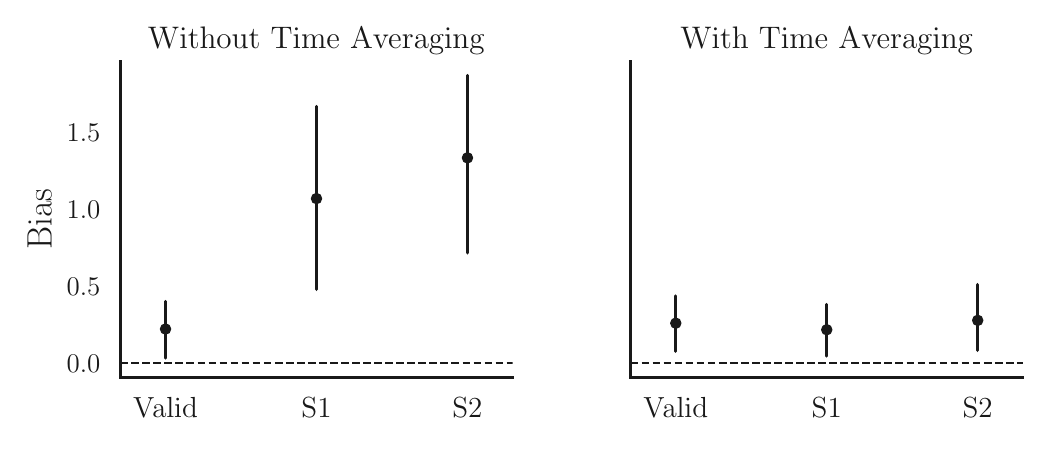}
    \caption{The left panel shows the bias in the high noise case of Figure~\ref{fig:sims_low_med_high_noise}. 
    Our selection procedures $S1$ and $S2$ are severely biased due to invalid donors. 
    The right panel shows where we time average the donor data, in buckets of 5 time points, before passing through the spillover detection. Note that we do not average pre- and post-intervention data in the same bucket, and we use the original, non-averaged data in the SC model. The averaging reduces false negatives such that the performance is on par with the optimal selection of valid donors. We address the residual bias in Figure~\ref{fig:proxy_debiased}.
    }
    \label{fig:spillover_time_averaged}
\end{figure}

\begin{figure}[t]
    \centering
    \includegraphics[width=\linewidth]{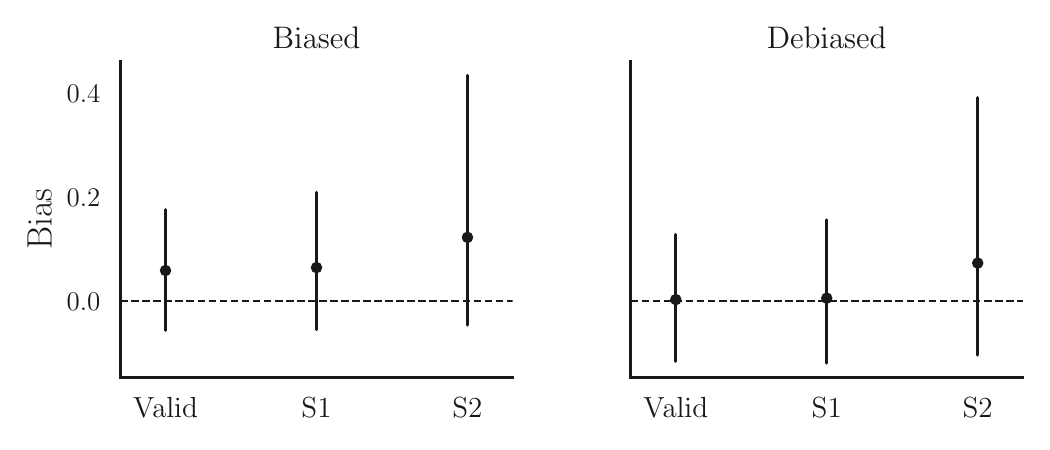}
    \caption{The left panel shows the bias in the medium noise case of Figure~\ref{fig:sims_low_med_high_noise}. 
    The estimated causal effects are biased even with optimal selection of valid donors. 
    This is because the donors are noisy (imperfect) proxies of the latents, and so conditioning fails to fully close backdoor paths.
    As described in Section~\ref{subsection: linear sc models}, we can debias these estimates using proximal causal inference to leverage donors excluded by the selection procedure.
    The right panel shows that the \emph{Valid} and $S1$ selection procedures now give unbiased estimates of the causal effect, although there is still some bias with procedure $S2$ due to invalid donors. As in Figure~\ref{fig:spillover_time_averaged}, time averaging would reduce the bias in $S2$.
    }
    \label{fig:proxy_debiased}
\end{figure}

\begin{figure}[t]
    \centering
    \includegraphics[width=\linewidth]{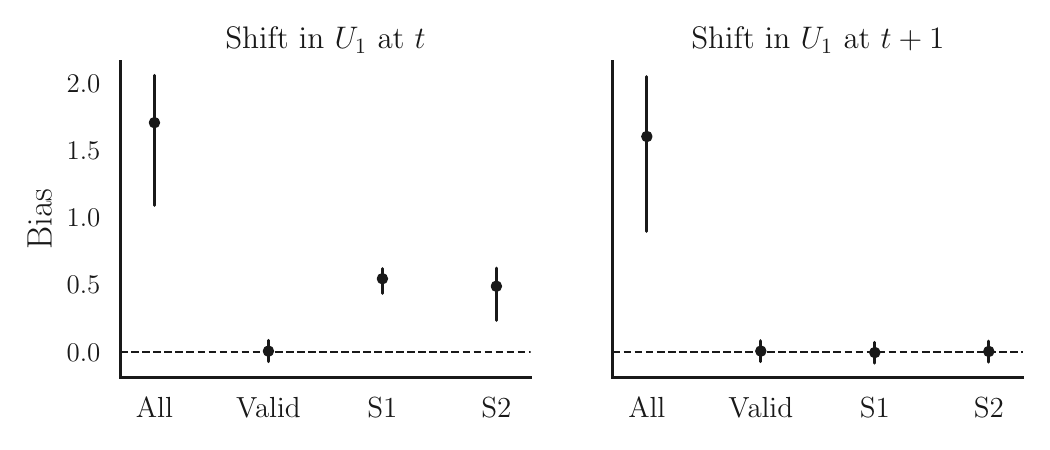}
    \caption{The left panel shows the bias when the error term for one of the latents shifts \emph{at the same time} as the intervention. The error term for $U_1$ shifts from $\epsilon_{u_1}\sim\mathcal{N}\left(0,1\right)$ pre-intervention, to $\epsilon_{u_1}\sim\mathcal{N}\left(0.5, 1\right)$ post-intervention. Our selection procedure incorrectly flags this as a spillover effect (false positive), thereby excluding donors that depend on $U_1$. The right panel shows where this latent shift occurs just \emph{after} the intervention. In this case our donor selection procedure recovers an unbiased estimate of the causal effect.
    }
    \label{fig:u_shift}
\end{figure}

%\begin{figure*}[t]
%\centering
%\begin{subfigure}[b]{0.45\textwidth}
%       \includegraphics[scale=0.58]{germany.pdf}
%       \caption{}
%       \label{fig:germany}
%\end{subfigure}
%\begin{subfigure}[b]{0.45\textwidth}
%       \includegraphics[scale=0.58]{prop99.pdf}
%       \caption{}
%       \label{fig:prop99}
%\end{subfigure}
%       \caption{(a) Estimated causal effect of German reunification. The results are very similar to Figure 3 from \cite{abadie2015comparative}, with a slightly larger positive initial ``demand boom'', and a flatter trend after the year 2000. 
%       (b) Estimated causal effect of the California tobacco tax. The results are very similar to Figure 3 from \cite{abadie2010synthetic}. Interestingly, Utah is excluded from the SC due to a relatively large decrease in sales at the time of the intervention, with New Mexico instead getting a non-zero weight. 
%       }
%\end{figure*}

\subsection{Sensitivity Analysis}\label{sensitivity analysis sec}

In this Section, we discuss the false positive (excluding donors not impacted by spillover effects), and false negative (including donors impacted by spillover effects) failure modes of our donor selection procedure. \cite{zeitler2023non} provide a general framework for sensitivity analysis in non-parametric SC models, bounding the bias when there are relevant latent variables with no observed donors as proxies (omitted variable bias). They also give a formula for calculating these bounds in the case of linear SC models, which we make use of in Sections~\ref{subsubsection: ov bias}, and \ref{subsubsection: fp bias}. In the context of negative control, \citet{miao2020confounding} introduce positive control outcome variables (which are similar to invalid donors), and discuss treating the spillover effect as a sensitivity parameter to investigate the plausibility of causal effect estimates. We make use of this approach in Section~\ref{subsubsection: fn bias}.

\subsubsection{Relevant Latents With No Observed Donors}
\label{subsubsection: ov bias}

Firstly, we discuss omitted variable bias due to the absence of observed donors. When there are latent confounding variables with no observed donors to act as proxies, we cannot close all backdoor paths and so our estimated causal effect will be biased. For linear SC models, this bias can be bounded using Eq.\ 3 from \cite{zeitler2023non}. In particular, let $x_i$ be the observed donors selected to construct the SC, and $\beta_{x_i}$ the corresponding donor weights. The potential omitted variable bias due to unobserved donors is then
\begin{multline}
\label{eq:sensitivity_analysis_unobserved}
\text{OV Bias} \leq N \times\max_{x_i}\left(\left|\beta_{x_i}\right|\right)\\
\times\max_{x_i}\left(\left|\E\left(x^\text{pre}_i\right)-
\E\left(x^\text{post}_i\right)\right|\right)
\end{multline}
Note that for this to be a \emph{valid} bound, we must assume that the observed donors are at least as important as the unobserved ones \citep[in the sense that the maximum weight and post-intervention shift for the observed donors is larger than that of the unobserved donors][]{zeitler2023non}. This is similar to assumptions used in propensity-based sensitivity analysis, where to obtain valid bounds one requires that unobserved confounders are at most as important to estimation as the observed confounders \citep{veitch2020sense}.

\subsubsection{Relevant Latents With No Selected Donors}
\label{subsubsection: fp bias}

A false positive result from the spillover detection procedure potentially introduces a related form of omitted variable bias. In particular, if \emph{all} donors acting as proxies for a relevant latent variable are excluded by the selection procedure, then the resulting SC causal estimate will be biased, in a similar manner to Section~\ref{subsubsection: ov bias}. However, in this case we can bound the bias using weaker assumptions than necessary for Eq.~\ref{eq:sensitivity_analysis_unobserved}, because the excluded donors are actually observed. Let $x_i$ be the donors selected to construct the SC, and $z_j$ be the donors excluded by the spillover detection procedure. The potential omitted variable bias due to false positives is then
\begin{multline}
\label{eq:sensitivity_analysis_excluded}
\text{FP Bias} \leq N \times\max_{x_i}\left(\left|\beta_{x_i}\right|\right)\\
\times\max_{z_j}\left(\left|\E\left(z^\text{pre}_j\right)-
\E\left(z^\text{post}_j\right)\right|\right)
\end{multline}

\subsubsection{Selected Donors Impacted By The Intervention}
\label{subsubsection: fn bias}

Finally, we discuss the potential bias introduced by false negatives. The approach taken here is similar to Section 6 of \cite{miao2020confounding}.
Let $x_i$ be the donors selected to construct the SC, and $\tau_{x_i}$ the corresponding spillover effects from the intervention. 
The potential bias due to false negatives is then
\begin{equation}
\label{eq:sensitivity_analysis_selected}
\text{FN Bias} \leq N \times\max_{x_i}\left(\left|\beta_{x_i}\right|\right)
\times\max_{x_i}\left(\left|\tau_{x_i}\right|\right)
\end{equation}
We don't know $\tau_{x_i}$, but \citep[following][]{miao2020confounding} we can treat it as a sensitivity parameter to gauge the plausibility of our causal effect estimate.
For example, if we have domain knowledge bounding $\tau_{x_i}$ then this can be used in Eq.~\ref{eq:sensitivity_analysis_selected} to bound the bias. We can also judge how large the spillover effect would have to be in order for the estimated causal effect to change sign.

\section{Experiments}
\label{sec: experiments}
In this Section, we illustrate the performance of our donor selection procedure in different scenarios using simulated data. We also demonstrate the procedure on real-world data by applying it to semi-synthetic variants of the German Reunification \citep{abadie2015comparative}, and California Tobacco Control \citep{abadie2010synthetic} datasets.

\subsection{Simulated Data}
\label{subsection: simulated data}

We construct 2000 simulated datasets according to the data generating process described in Appendix~\ref{section: sim data}, with large pools of potential donors, most of which are impacted by spillover effects from the intervention. 
Each dataset consists of a target timeseries, 10 latents, and a pool of 1000 potential donors. In each pool, a random set of $80\%$ of the donors are invalid, impacted by a spillover effect of $-2$.

For a given dataset, we must first attempt to identify the subset of valid donors. 
We refer to the set of donors returned by a selection procedure as the \emph{potentially valid donors} (PVDs).
Next, in order to simplify comparisons between different selection procedures, we estimate the SC using a sample of 10 donors from the PVDs. 
This sampling step ensures that the resulting SC models all have the same number of parameters, regardless of the number of PVDs identified by the different selection procedures\footnote{
For example, if procedure $S1$ returns 10 PVDs, and procedure $S2$ returns 50 PVDs, we sample 10 donors from $S2$'s PVDs such that the final SC models both have 10 donors.
}.
In Appendix~\ref{section: sparse weights}, we compare this approach to one where we estimate SCs using the full set of PVDs, and employ regularization to enforce sparsity in the donor weights. 
For the data generating process considered here, the expected bias is the same with both approaches. We use the sampling approach to focus the comparisons on the selection of \emph{valid} donors, rather than also having to consider possible differences due to selecting donors based on pre-treatment fit.

In Figure~\ref{fig:sims_low_med_high_noise}, we show the bias after applying our donor selection procedures $S1$\footnote{
For selection procedure $S1$, we select the 10 donors with the smallest forecast errors, and so the sampling step discussed previously is not relevant for this procedure.
}
and $S2$. 
For comparison, the case labelled \emph{All} shows the bias if we assume that all the donors are valid, and the case labelled \emph{Valid} shows the optimal scenario if we had perfect knowledge about which donors are valid. Our selection procedures are very close to optimal in both the low and medium donor noise cases. However, the performance degrades as the donor noise becomes comparable in magnitude to the spillover effect. In Figure~\ref{fig:spillover_time_averaged}, we address this issue and show how time averaging improves performance when the donors are very noisy. In Figure~\ref{fig:proxy_debiased}, we leverage excluded donors to further debias effect estimates in situations with noisy donors, as discussed in Section~\ref{subsection: linear sc models}. Finally, in Figure~\ref{fig:u_shift}, we show how contemporaneous shifts in the latent distributions can bias our selection procedure, and that these latent shifts are not an issue if they occur later in the post-intervention period.

\begin{figure*}[t]
\centering
\begin{subfigure}[b]{0.45\textwidth}
       \includegraphics[width=0.85\linewidth]{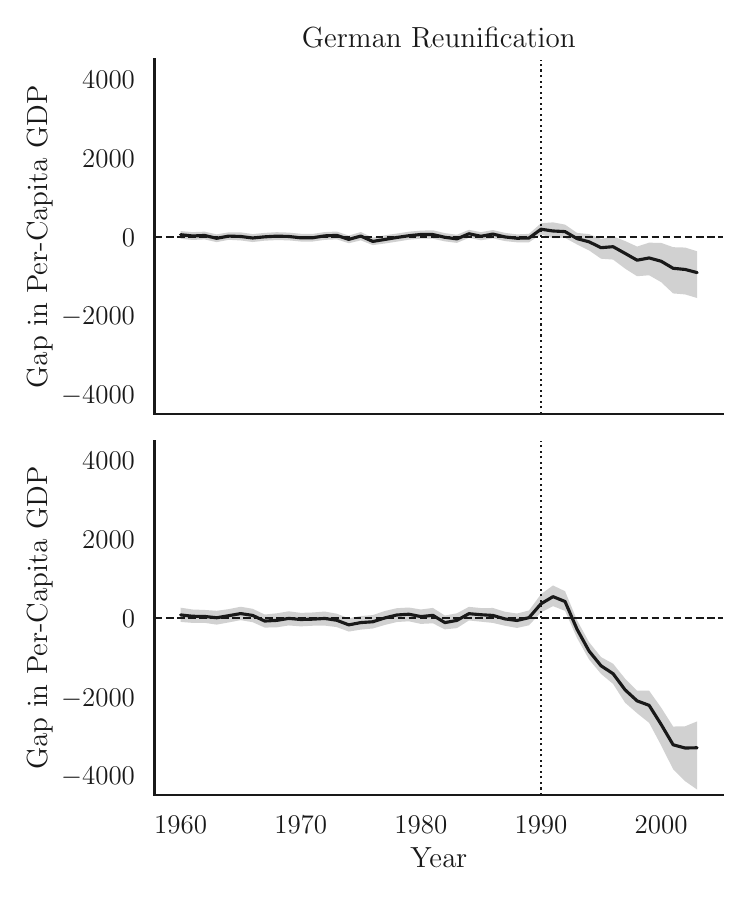}
       \caption{}
       \label{fig:germany}
\end{subfigure}
\begin{subfigure}[b]{0.45\textwidth}
       \includegraphics[width=0.85\linewidth]{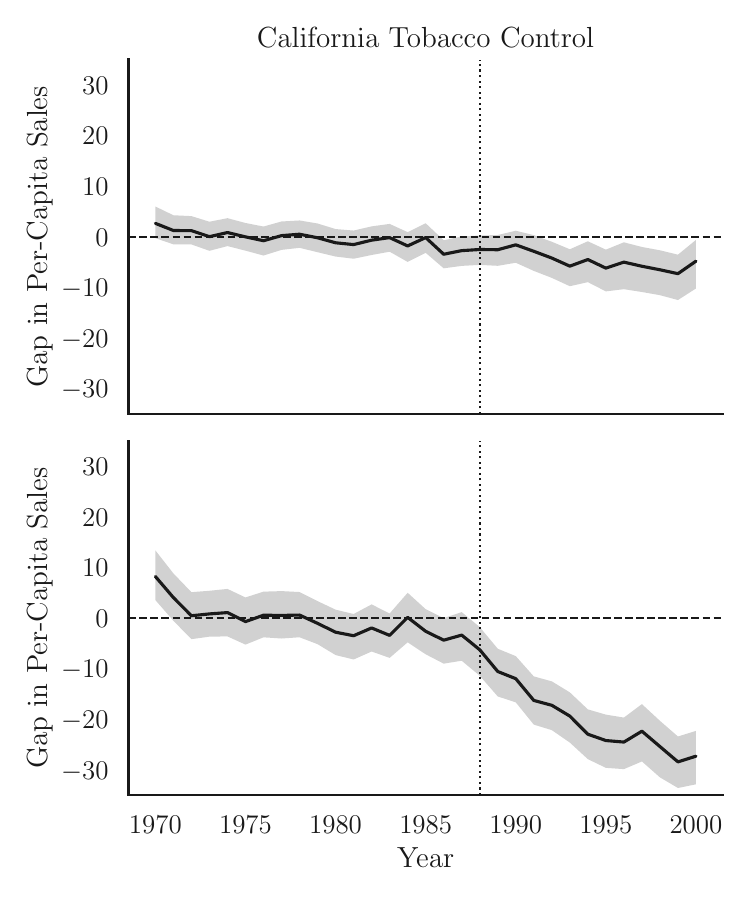}
       \caption{}
       \label{fig:prop99}
\end{subfigure}
       \caption{(a) Estimated causal effect of German reunification. The intervention time is indicated by the vertical dotted line, and the shaded area gives the 95\% uncertainty interval. In the top panel, the pool of potential donors includes a semi-synthetic unit that is a noisy proxy of West Germany. This invalid donor receives a large weight in the SC model, which biases the estimate towards zero.
       The bottom panel shows the effect estimated after applying our donor selection procedure. The invalid donor is correctly flagged and excluded from the SC model. In this case, the results are very similar to Figure 3 from \cite{abadie2015comparative}.
       (b) Estimated causal effect of the California tobacco tax. 
       The top panel includes a semi-synthetic unit that is a noisy proxy of California, biasing the estimate towards zero. In the bottom panel, this invalid donor has been excluded by our selection procedure, and the resulting 
       effect estimate is very similar to Figure 3 from \cite{abadie2010synthetic}. 
       }
\end{figure*}

\subsection{Semi-Synthetic Data}

In this Section, we further validate our donor selection procedure with using real-world data. In particular, we consider semi-synthetic variants of the 1990 reunification of West and East Germany \citep{abadie2015comparative}, and the 25 cents tobacco tax increase in California in 1988 \citep{abadie2010synthetic}. 
For each dataset, we introduce a semi-synthetic unit to the pool of potential donors, constructed to be a noisy proxy of the target as $x_\text{syn}^t\sim\mathcal{N}\left(y^t, \sigma\right)$.
Being predictive of the targets, these invalid donors receive large weights in the SC models and bias the effect estimates towards zero.
This also highlights the critical distinction between selecting \emph{valid} donors, and the usual considerations of selecting donors based on pre-treatment fit. 

In Figure~\ref{fig:germany}, we show the estimates for the effect of German reunification on Germany's per-capita GDP, and in Figure~\ref{fig:prop99}, the estimates for the effect of California's tobacco tax increase on per-capita pack sales. 
The SC models in the top panels include the semi-synthetic, invalid donors, which results in causal effect estimates much closer to zero than the original findings of \cite{abadie2010synthetic,abadie2015comparative}. The bottom panels show the effect estimates after applying our donor selection procedure. In this case, the semi-synthetic, invalid donors are correctly flagged and excluded from the SC models, resulting in causal effect estimates that are consistent with \cite{abadie2010synthetic,abadie2015comparative}.

\section{Conclusion}

In this paper, we presented a practical, theoretically-grounded donor selection procedure for SC models, aimed at weakening the domain knowledge requirements for selecting valid donors. 
This procedure augments partial knowledge about invalid donors, thereby reducing the burden on the practitioner to explicitly know that a (potentially very large) pool of donors is not impacted by spillover effects from the intervention.
Working in the structural causal model framework, we utilised techniques from proximal causal inference to show that the assumptions necessary for SC identifiability also facilitate forecasting post-intervention donor values from pre-intervention data. We used this result to detect potential spillover effects, and exclude invalid donors when constructing SCs.
Furthermore, in the context of recent works on sensitivity analysis, we discussed bounding the bias due to false positive and false negative selection errors.
We concluded by providing an empirical demonstration of our selection procedure on both simulated and real-world datasets.

\bibliography{main}

%\section{Technical conditions for proof}
%The proof of Theorem~\ref{theorem: forecast} depends on certain technical conditions, which we now briefly discuss. See Appendix C of \cite{shi2021proximal} for full details. To prove existence of the function $H^t$, consider the space of all square-integrable functions $s$, denoted $L^2\{F(s)\}$, with respect to a cumulative distribution function $F(s)$. This is a Hilbert space with inner product given by $\langle f,g \rangle = \int f(s)g(s)dF(s)$. Denote by $K_x$ the conditional expectation operator 
%$L^2\{F(w|x)\}\rightarrow L^2\{F(\lambda |x)\}$, with $K_xh = E[H(w) | \lambda, x]$ for $H \in L^2\{F(w|x)\}$, and denote by $ (\tau_{x,n}, \varphi_{x,n}, \psi_{x,n})_{n=1}^\infty $ a singular value decomposition of $K_x$. Given the following regularity conditions:
%\begin{enumerate}
%    \item $\iint f(w | \lambda, x)f(\lambda | w, x)dwd\lambda < \infty$,
%    \item $\int f^2(y | \lambda, x)f(\lambda | x)d\lambda < \infty$,
%    \item $\sum_{n=1}^\infty| \langle f(y | \lambda, x), \psi_x,n \rangle |^2 < \infty$,
%\end{enumerate}
%Picard’s theorem implies the existence of the required function $H^t$ in Theorem~\ref{theorem: forecast}.

\newpage

\onecolumn

%\title{Spillover Detection for Donor Selection in Synthetic Control Models\\(Supplementary Material)}
%\maketitle

\section*{Appendix}
\appendix

\section{Proof of Theorem~\ref{theorem: forecast}}\label{section: proof}

 \begin{thm*}
   If causal mechanisms are invariant, and the donors $x^{t-1}_1,\dots,x^{t-1}_N$ are proxies for the latents $u^{t-1}_1, \dots, u^{t-1}_M$ then, for each donor $x_i$, there exists a unique function $h_i$ such that for all time points $t$ we have: 
   \begin{equation}
   \label{eq:appendix donor forecast}
   \E\left(x^t_i \right) = \E\left(h_i(x^{t-1}_1, \dots, x^{t-1}_N, P(\epsilon^t_{x_i},\epsilon^t_u)) \right)
   \end{equation}
  \end{thm*}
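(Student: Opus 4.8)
The plan is to construct the bridge function $h_i$ by first substituting the latent transition dynamics into the donor mechanism, and then inverting the donor-to-latent map in the mean-square sense guaranteed by completeness. First I would use invariance of the causal mechanisms to drop the time indices on $f_i$ and the $m_j$, and compose the structural equations for the donor and the latents: since $x_i^t = f_i(u^t, \epsilon_{x_i}^t)$ and $u_j^t = m_j(u_j^{t-1}, \epsilon_{u_j}^t)$, we may write
\begin{equation*}
x_i^t = f_i\bigl(m_1(u_1^{t-1}, \epsilon_{u_1}^t), \dots, m_M(u_M^{t-1}, \epsilon_{u_M}^t), \epsilon_{x_i}^t\bigr),
\end{equation*}
which exhibits $x_i^t$ as a function of the previous latents $u^{t-1}$ and the exogenous errors $\epsilon_u^t, \epsilon_{x_i}^t$ at time $t$. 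Because these time-$t$ errors are independent of everything at time $t-1$, I would integrate them out to obtain a function $\psi_i$ with
\begin{equation*}
\E\bigl(x_i^t \mid u^{t-1}\bigr) = \psi_i\bigl(u^{t-1}, P(\epsilon_{x_i}^t, \epsilon_u^t)\bigr),
\end{equation*}
where the dependence on the error distributions $P(\epsilon_{x_i}^t, \epsilon_u^t)$ is exactly what appears inside $h_i$ in the statement.

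The core step is then to pose the Fredholm integral equation of the first kind
\begin{equation*}
\E\bigl(h_i(x^{t-1}, P(\epsilon_{x_i}^t, \epsilon_u^t)) \mid u^{t-1}\bigr) = \psi_i\bigl(u^{t-1}, P(\epsilon_{x_i}^t, \epsilon_u^t)\bigr)
\end{equation*}
and solve it for $h_i$, treating the error distributions as fixed parameters. Uniqueness is the easy direction, and is precisely where the completeness condition (Def.~\ref{definition: completeness}) enters: if $h_i$ and $h_i'$ were two solutions, their difference $\delta = h_i - h_i'$ would satisfy $\E(\delta(x^{t-1}) \mid u^{t-1}) = 0$, so completeness at time $t-1$ forces $\delta(x^{t-1}) = 0$ almost surely, making the solution unique as a function of the donors.

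Having solved the bridge equation, I would conclude by taking expectations over $u^{t-1}$ on both sides and applying the tower property:
\begin{equation*}
\E\bigl(h_i(x^{t-1}, \dots)\bigr) = \E_{u^{t-1}}\Bigl[\E\bigl(h_i(x^{t-1},\dots) \mid u^{t-1}\bigr)\Bigr] = \E_{u^{t-1}}\bigl[\psi_i(u^{t-1},\dots)\bigr] = \E\bigl(x_i^t\bigr),
\end{equation*}
which is the claimed forecast identity in Eq.~\ref{eq:appendix donor forecast}. Invariance then guarantees that $f_i$, the $m_j$, and the pre-intervention error distributions do not depend on $t$, so the same $h_i$ solves the bridge equation at every time point, giving a single forecasting function valid for all $t$.

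The main obstacle is establishing \emph{existence} of a solution to the bridge equation, since completeness delivers only injectivity of the conditional-expectation operator $h \mapsto \E(h(x^{t-1}) \mid u^{t-1})$, and hence only uniqueness. Existence additionally requires $\psi_i$ to lie in the range of this operator---informally, that the donors carry at least as much information about the latents as is needed to reconstruct $\E(x_i^t \mid u^{t-1})$. I would handle this by invoking the standard solvability (Picard-type) condition used for proxy bridge functions in the proximal causal inference literature, under which the donors being complete proxies for the latents guarantees a solution; this is the same regularity assumption underpinning the non-parametric identifiability results of \cite{shi2023theory} and \cite{zeitler2023non} on which the theorem builds.
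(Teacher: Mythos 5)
Your proposal is correct and follows essentially the same route as the paper's proof: rewrite $x_i^t$ as a function of $u^{t-1}$ and the time-$t$ exogenous errors via the deterministic latent transition, pose a bridge (Fredholm first-kind) equation whose solution exists by the solvability condition imported from \citet{shi2023theory} and \citet{zeitler2023non}, obtain uniqueness from the completeness condition, and conclude by the tower property together with invariance of the causal mechanisms. The only cosmetic difference is that the paper sets up the bridge equation at the level of the conditional density $P^t(x_i^t \mid u^{t-1}, \epsilon^t_u, \epsilon^t_{x_i})$, introducing a kernel $H_i^t$ and then integrating to obtain $h_i$, whereas you pose the equation directly for the conditional mean; you are, if anything, more explicit than the paper (which relegates this to a footnote) that completeness delivers only injectivity, and that existence of $h_i$ requires the separate Picard-type range condition.
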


\begin{proof}
Define $x^{t-1}:=x^{t-1}_1,\dots,x^{t-1}_N$, $u^{t-1}:=u^{t-1}_1, \dots, u^{t-1}_M$, and $\epsilon^t_i:=(\epsilon^t_{x_i}, \epsilon^t_u)$ for ease of exposition. First, note that we can write the causal mechanism for $x^t_i$ as 
    \begin{equation}\label{equation: causal mechanism}
    P^t(x^t_i \mid u^t, \epsilon_{x_i}^t)=P^t(x^t_i \mid u^{t-1}, \epsilon_u^t, \epsilon_{x_i}^t)=P^t(x_i^t \mid u^{t-1}, \epsilon^t_i)
    \end{equation}
where the first equality follows from the fact $u^t$ is a deterministic function of $u^{t-1}$, and $\epsilon_u^t$. Now, using the proxy variable completeness condition (Def.~\ref{definition: completeness}), we can relate the causal mechanism for $x_i^t$ to $x^{t-1}$ via a function\footnote{A proof for the existence of $H_i^t$ can be found in section F of the supplementary material in \citealt{shi2023theory}, as well as appendix A of \citealt{zeitler2023non}. A simple example of such a function is: $P(Y) = \int P(Y \mid X) P(X)dx = \int H(Y, X) P(X)dx.$} $H^t_i$ as
    \begin{equation}\label{equation: existence of H}
        \begin{aligned}
        P^t(x_i^t \mid u^{t-1}, \epsilon^t_i) &=
        \int H_i^t(x_i^t, x^{t-1}, \epsilon^t_i) \underbrace{P^t(x^{t-1} \mid u^{t-1}, \epsilon^t_i)}_{P^t(x^{t-1} \mid u^{t-1})}dx^{t-1} 
        \end{aligned}
    \end{equation}
This implies that
 \begin{equation*}
        \begin{aligned}
       \E\left(x_i^t \mid u^{t-1}\right) &= \iiint x_i^t\, H_i^t(x_i^t,x^{t-1}, \epsilon^t_i) P^t(x^{t-1} \mid u^{t-1})P(\epsilon^t_i)\,d\epsilon^t_i\,dx_i^t\,dx^{t-1}  \\
       &= \int{P^t(x^{t-1} \mid u^{t-1})}\int \underbrace{\left[\int{x_i^t\,H_i^t(x_i^t, x^{t-1}, \epsilon^t_i)}dx_i^t\right]}_{g_i^t(x^{t-1}, \epsilon^t_i)}P(\epsilon^t_i)\,d\epsilon^t_i\,dx^{t-1}\\
       &=\int \underbrace{\E_{P(\epsilon^t_i)}\left(g_i^t(x^{t-1}, \epsilon^t_i)\right)}_{h_i^t(x^{t-1}, P(\epsilon^t_i))}P^t(x^{t-1} \mid u^{t-1})\,dx^{t-1} \\
       &=\E\left(h_i^t(x^{t-1}, P(\epsilon^t_i)) \mid u^{t-1} \right)
        \end{aligned}
    \end{equation*}
Marginalising over $u^{t-1}$ yields the result that
\begin{equation*}
       \E\left(x_i^t\right) =
       \E\left(h_i^t(x^{t-1}, P(\epsilon^t_i))\right)
\end{equation*}

Next, we prove that $h^t_i$ doesn't depend on time, by showing that the solution to the integral equation in Eq.~\ref{equation: existence of H} for time point $t$ is also a solution for any other time point $t'$. 
Remember that causal mechanisms are invariant and so don't depend on $t$. Consider the left hand side of Eq.~\ref{equation: existence of H}:
   $$
        P^t(x_i^t \mid u^{t-1}, \epsilon^t_i) =
        \int H_i^t(x_i^t, x^{t-1}, \epsilon^t_i) P^t(x^{t-1} \mid u^{t-1})\,dx^{t-1}. 
    $$
By Eq.~\ref{equation: causal mechanism}, $P^t(x_i^t \mid u^{t-1}, \epsilon^t_i)$ is a causal mechanism and so doesn't depend on $t$. Now consider the $P^t(x^{t-1} \mid u^{t-1})$ term under the integral on the right hand side. This distribution Markov factorises according to the structure of the DAG in Figure~\ref{fig:syn_cont}, resulting in products of causal mechanisms. Hence, a solution to the integral equation for one time point $t$ is a solution for any other time point. 
    
Finally, we prove uniqueness of $h_i^t$ for a given time point, as this implies there exists a unique function $h_i$ for all time points. Suppose that $h_i^t$ and $\widetilde{h}_i^t$ are both solutions at time $t$, such that
$$\E\left(h_i^t(x^{t-1}, P(\epsilon^t_i)) \mid u^{t-1}\right) = \E\left(\widetilde{h}_i^t(x^{t-1}, P(\epsilon^t_i)) \mid u^{t-1} \right)$$
Therefore, $\E\left(h_i^t(x^{t-1}, P(\epsilon^t_i)) - \widetilde{h}_i^t(x^{t-1}, P(\epsilon^t_i)) \mid u^{t-1} \right) = 0$, and the proxy variable completeness condition (Def.~\ref{definition: completeness}) implies that 
$h_i^t(x^{t-1}, P(\epsilon^t_i)) = \widetilde{h}_i^t(x^{t-1}, P(\epsilon^t_i))$, completing the proof.
%\begin{rmk}
\end{proof}

The result can be generalised to the case where the evolution depends on $T$ time points by modifying the SCSCM such that
\begin{enumerate}
   \item $u_i^t=m_i^t(u_i^{t-1},\dots,u_i^{t-1-T}, \epsilon^t_{u_i})$ 
   \item $x^t_i=f_i^t(u_1^t,\dots,u_M^t,\dots,u_1^{t-T},\dots,u_M^{t-T}, \epsilon^t_{x_i})$ 
   \item $y^t=g^t(u_1^t,\dots,u_M^t,\dots,u_1^{t-T},\dots,u_M^{t-T}, I^t, \epsilon^t_{y})$ 
\end{enumerate}
and updating the proxy variable completeness condition (Def.~\ref{definition: completeness}) accordingly. 
%\end{rmk}

%\newpage
\begin{figure}[t]
    \centering
    \includegraphics[scale=0.6]{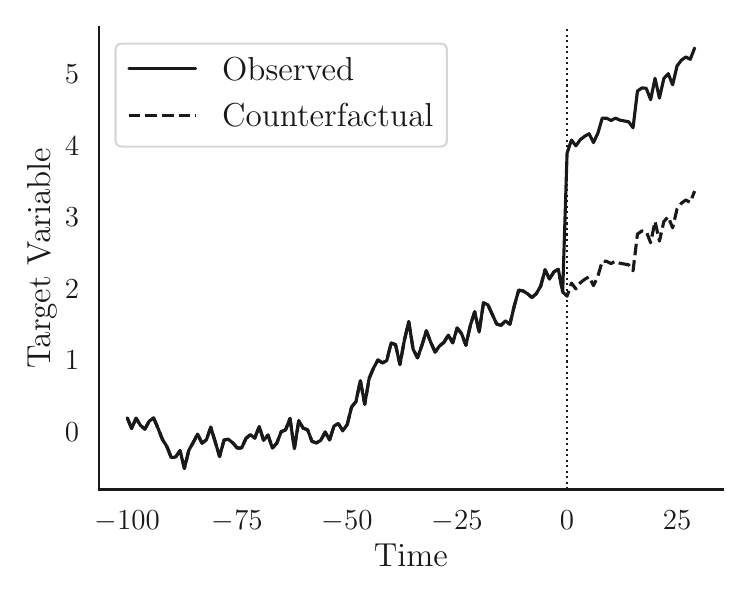}
    \caption{Example of a target variable $y$ from a simulated dataset. The intervention happens at time 0, increasing $y$ by $\tau=2$.}
    \label{fig:y_sim}
\end{figure}

\section{Simulated Data}\label{section: sim data}
We construct simulated datasets in Section~\ref{subsection: simulated data} according to the following data generating process \citep[similar to the local linear trend model with long-term slope from][]{brodersen2015inferring}.
As mentioned in the main text, we use the Einstein summation convention such that a repeated index implies summation, e.g.\ $\beta_{ij}\,u_j^t := \sum_j \beta_{ij}\,u_j^t$.
\begin{align*}
u_i^{t+1} &\sim \mathcal{N}\left(u_i^t + \delta_i^t\,,\,\sigma_u\right) \\
\delta_i^{t+1} &\sim \mathcal{N}\left(S_i + \rho_i\left(\delta_i^t-S_i\right),\,\sigma_\delta\right) \\
y^t &\sim \mathcal{N}\left(\alpha_i\,u_i^t+\tau I^t\,,\,\sigma_y\right) \\
x_i^t &\sim \mathcal{N}\left(\beta_{ij}\,u_j^t+\tau_{x_i} I^t\,,\,\sigma_x\right)
\end{align*}
We generate 2000 datasets for each panel in Figure~\ref{fig:sims_low_med_high_noise}. A dataset consists of 1 target $y$, 10 latents $u_i$, and 1000 potential donors $x_i$ (out of which we select 10 donors for constructing SCs). The time-series data have 100 time points pre-intervention, and 30 time points post-intervention. The causal effect of the intervention on the target is $\tau=2$, and 80\% of the potential donors are invalid, with spillover effects $\tau_{x_i}=-2$ (the remaining 20\% are valid, with $\tau_{x_i}=0$).
In the above, $I^t$ is an indicator variable for the intervention that is 0 pre-intervention and 1 post-intervention. 
The long-term slope is sampled as $S_i\sim\mathcal{N}(0.1, 0.1)$, with $\rho_i\sim \mathcal{U}\left(0, 1\right)$ interpolating between this slope and a random walk. We set $\sigma_u=1$, $\sigma_\delta=\sigma_y=0.1$, and 
$\sigma_x\in\{0.1, 0.5, 1.0\}$ for the low, medium, and high donor noise levels (the high noise level is comparable in magnitude to $\tau$).
For simplicity, we set the coefficients as $\alpha_i=1$, and $\beta_{ij}=1$ (except for the datasets demonstrating the latent shift in Figure~\ref{fig:u_shift}, where we set $\beta_{i1}=0$ for a random subset of valid donors).
Figure~\ref{fig:y_sim} provides an illustrated example of a target variable generated by the process described above.

\section{Sparse Donor Weights}\label{section: sparse weights}

\begin{figure}[t]
    \centering
    \includegraphics[scale=0.7]{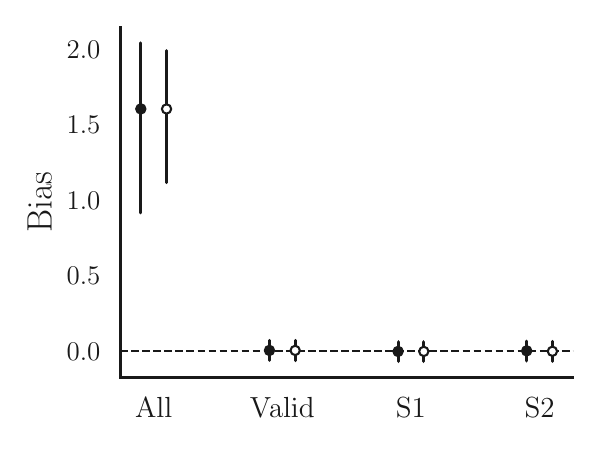}
    \caption{Low noise case of Figure~\ref{fig:sims_low_med_high_noise}. The filled circles show the results discussed in the main text, where each SC model is estimated using 10 donors sampled from the set of PVDs. The open circles show the alternative sparse regularization approach, where each SC model is estimated using the full set of PVDs, and we employ sparsity-inducing priors (Eq.~\ref{eq:sparse_prior}) to control the number of non-zero weights. The expected bias is identical for both approaches. The variance in the \emph{All} case is slightly lower. This is because the number of donors with non-zero weight typically ends up being a bit larger than 10, and so the fraction of invalid donors is more concentrated around 80\%.}
    \label{fig:bias_compare}
\end{figure}

As described in Section~\ref{subsection: simulated data}, for each of the 2000 simulated datasets, we construct SCs by sampling 10 donors from the set of potentially valid donors (PVDs). The set of PVDs is a subset of the pool of 1000 potential donors, and depends on the specific selection procedure used to identify valid donors (ie. \emph{All}, \emph{Valid}, \emph{S1}, or \emph{S2}). 
As an alternative to sampling 10 donors, we can construct SCs using the full set of PVDs, and employ regularization to select approximately 10 donors to have non-zero weights based on the pre-treatment fit \citep[e.g.,][]{brodersen2015inferring,ben-michael_2021}.
With this sparse regularization approach, the total number of parameters is equal to the number of PVDs identified by the selection procedure. 
For the simulations described in Appendix~\ref{section: sim data}, SC models estimated using the full set of PVDs have 1000 parameters in the \emph{All} case, 200 parameters in the \emph{Valid} case, 10 parameters in the \emph{S1} case, and approximately 200 parameters in the \emph{S2} case (but the exact number varies across datasets).

To enforce sparsity in the donor weights, we set the prior to be a discrete mixture 
of normal distributions \citep{betancourt_modelling_sparsity_2021}.
\begin{equation}
\label{eq:sparse_prior}
\beta_i\sim\eta\,\mathcal{N}\left(0, \sigma_1\right) + \left(1-\eta\right)\mathcal{N}\left(0,\sigma_2\right),\qquad 0<\eta<1,\qquad\sigma_1 \ll \sigma_2
\end{equation}
The parameter $\eta$ controls how weights tend to cluster close to zero via the narrow distribution, and can be interpreted as the expected fraction of donors with effectively zero weight.
This is similar to the spike and slab prior of \cite{mitchell_beauchamp_1988}. However, replacing the exact zero values with a narrow distribution of ``irrelevant'' values facilitates straightforward estimation with standard probabilistic programming languages. 

In Figure~\ref{fig:bias_compare}, we compare this sparse regularization approach to the sampling approach from the main text. For this data generating process, the expected bias is identical. The variance in the \emph{All} case is slightly lower. This is because the number of donors with non-zero weight typically ends up being a bit larger than 10, and so the fraction of invalid donors is more concentrated around 80\%.

\end{document}